\documentclass{llncs}
\usepackage{amssymb}
\usepackage{graphicx}
\usepackage{subfigure}
\usepackage{enumerate}
\usepackage{amsmath}
\usepackage{epsfig}
\usepackage{xcolor}
\usepackage{arash}

\begin{document}

\title{Priority Queues with Multiple Time Fingers}
\author{Amr Elmasry\thanks{Supported by an Alexander von Humboldt Fellowship.} 
\and Arash Farzan
\and
John Iacono\inst{3}\thanks{Research partially supported by NSF grants CCF-0430849,  CCF-1018370, and an Alfred P.~Sloan fellowship, and by MADALGO---Center for Massive Data Algorithmics, a Center of the Danish National Research Foundation, Aarhus University.}
}
\institute{Max-Planck-Institut f\"ur Informatik, \\ Saarbr\"ucken, Germany
\and
Polytechnic Institute of New York Univerity\\
Brooklyn, New York, USA}

\date{}

\maketitle 

\begin{abstract}
A priority queue is presented that supports the operations insert and find-min in worst-case constant time, and delete and delete-min on element $x$ in worst-case $O(\lg(\min\{w_x, q_x\}+2))$ time, where $w_x$ (respectively $q_x$) is the number of elements inserted after $x$ (respectively before $x$) and are still present at the time of the deletion of $x$. 
Our priority queue then has both the working-set and the queueish properties, and more strongly it satisfies these properties in the worst-case sense. 
We also define a new distribution-sensitive property---the time-finger property, which encapsulates and generalizes both the working-set and queueish properties, and present a priority queue that satisfies this property.

In addition, we prove a strong implication that the working-set property is equivalent to the unified bound 
(which is the minimum per operation among the static finger, static optimality, and the working-set bounds).
This latter result is of tremendous interest by itself as it had gone unnoticed since the introduction of such bounds by  Sleater and Tarjan~\cite{splay-tree}.
Accordingly, our priority queue satisfies other distribution-sensitive properties as the static finger, static optimality, and the unified bound.

{\bf Keywords:} Data structures, splay trees, priority queues.

\end{abstract}

\section{Introduction\label{intro}}

Distribution-sensitive data structures are those for which the time bounds to perform operations vary depending on the 
sequence of operations performed~\cite{iacono-thesis}. These data structures typically perform as well as their distribution-insensitive counterparts 
on a random sequence of operations in an amortized sense; yet where the sequence of operations follows a particular distribution, or there is temporal or spatial locality in the sequence of operations, the distribution-sensitive data structures perform significantly better. 

The quintessential distribution-sensitive data structure is splay trees~\cite{splay-tree}. Splay trees seem to perform 
very efficiently (much faster than $O(\lg n)$ search time on a set of $n$ elements)  over the sequence of operations that arise naturally. 
There still exists no single comprehensive distribution-sensitive analysis of splay trees; instead, there are many theorems and conjectures that characterize the distribution-sensitive properties of splay trees. These properties include the static-optimality, static-finger bound, working-set bound, sequential-access bound,  unified bound, dynamic-finger bound, and the unified-conjecture~\cite{cole-paper,splay-tree}. 
As defined in~\cite{splay-tree}, the ``unified-bound" is the per-operation minimum of the static-optimality, static-finger, and working-set bounds. 
By the ``unified-conjecture'', we assume the definition given in~\cite{cole-paper}, which subsumes both the dynamic-finger and working-set bounds. 
We refer the reader to~\cite{cole-paper} and also~\cite{splay-tree} for a thorough definition and discussion of these properties. 

Some of these properties imply others. Figure~\ref{implication-diagram} illustrates such relationship between these properties.
As a new contribution, we prove in Section~\ref{unified-from-working-set} the implication of the unified bound from the working-set bound. We also show that the working-set bound on two sequences is asymptotically the same as the working set bound of any interleaving of these sequences; the full proof appears in Appendix~\ref{sec:mws}.

\begin{figure}[t]
	\centering\includegraphics[scale=0.4]{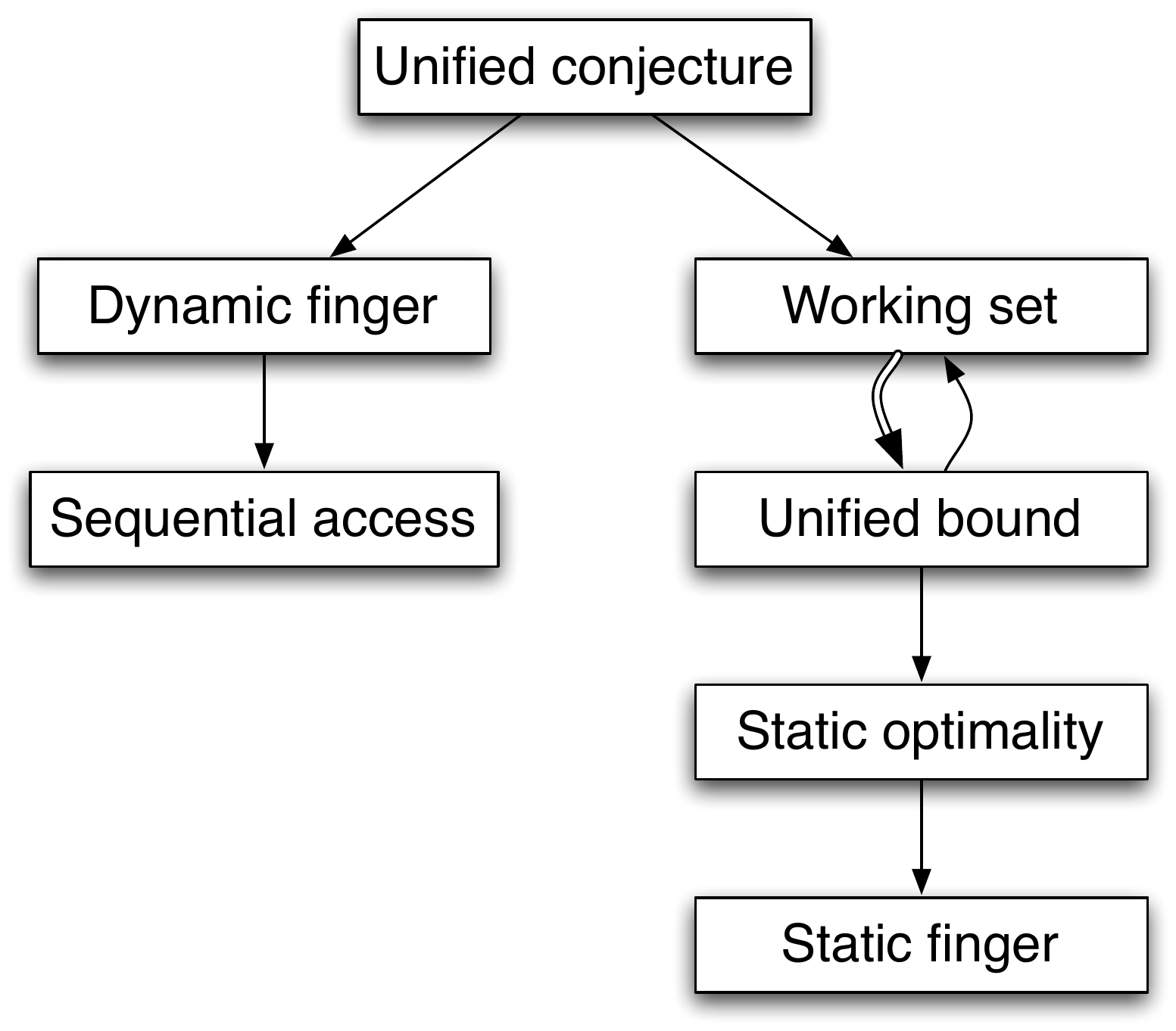}
	\caption{The implication relationships between various distribution-sensitive properties. The implication of the unified bound from 
	the working-set property is a contribution of Section~\ref{unified-from-working-set} of this paper.\label{implication-diagram}}
\end{figure}

Distribution-sensitive data structures are not limited to search trees. Priority queues have 
also been designed and analyzed in the context of distribution-sensitivity~\cite{funnel-heap,elmasry,iacono-pairing-heaps,iacono-thesis,queaps}.
It is easy to observe that a priority queue with constant insertion time cannot have the sequential-access property (and hence cannot as well have the dynamic-finger property), 
for otherwise a sequence of insertions followed by a sequence of minimum-deletions give the elements in sorted order in linear time. Therefore, the working-set property has 
been of main interest for priority queues. Informally, the working-set property states that elements that have been recently updated are faster to update again compared to the elements that have not been accessed in the recent past. Iacono~\cite{iacono-pairing-heaps} proved that pairing-heaps~\cite{pairing-heaps} satisfy the working-set property as follows;
in a heap of maximum size $n$, it takes $\oh{\lg\min\set{n_x,n}}$ amortized time to delete the minimum element $x$, where $n_x$ is the number of operations performed since $x$'s insertion. 
Funnel-heaps are I/O-efficient heaps for which it takes $\oh{\lg\min\set{i_x + 2,n}}$ to delete the minimum element $x$, where $i_x$ is the number of insertions made since $x$'s insertion (we note that $i_x \le n_x$). Elmasry~\cite{elmasry} gave a priority queue supporting delete-min in $\oh{\lg(w_x+2)}$ worst-case time, where $w_x$ is the number of elements inserted after $x$ and are still present in the priority queue ($w_x \le i_x \le n_x$). 

None of these results supports delete with the working-set bound. In section~\ref{delete-section}, we present a priority queue that supports 
{\em both} delete and delete-min in $\oh{\lg(w_x+2)}$ worst-case time (and insertion in worst-case constant time).

One natural sequence of  operations  in a data structure is a first-in first-out type of updates. Data structures sensitive to 
these sequences must operate fast on elements that have been least recently accessed. This distribution-sensitive property is 
referred to as the ``queueish'' property in~\cite{queaps}. In the context of priority queues, such property states that 
the time to perform delete or delete-min on an element $x$ is $\oh{\lg\paren{q_x+2}}$, , where $q_x$ is the number of elements inserted before $x$ and are still present in the priority queue.
Note that $q_x = n- w_x$, where $n$ is the number of elements currently present in the priority queue.
However, it is shown in~\cite{queaps} that no binary search tree can be sensitive to this property. 
Albeit, a priority queue with the queueish property is presented in the same paper.

It remained open whether there exists a priority queue sensitive to both the working-set  and the queueish properties. 
We resolve the question affirmatively by presenting such a priority queue in Section~\ref{queueish-section}. This priority queue 
is the most comprehensive distribution-sensitive priority queue to date.

In Section~\ref{fingers-section}, we present a more powerful priority queue that incorporates multiple  time fingers.
We define time fingers $t_1, t_2, \ldots, t_c$ as points of time during the sequence of updates, which are set on-line as they arrive.
We define the working-set of an element $x$ with respect to time finger $t_i$, $w_x(t_i)$, as the number of elements that have been 
inserted in the window of time between the insertion of $x$ and $t_i$ and are still present in the priority queue.
We say a priority queue satisfies the multiple time finger property if the time to delete or delete-min $x$ is $\oh{\lg(\Min_{i=1}^c\set{w_x(t_i)}+2)}$. 
It is not hard to see that the working-set property is equivalent to having a single time finger of 
$t_1=+\infty$, and the queueish property is equivalent to having a single time finger of $t_1=0$. 
The priority queue presented in Section~\ref{queueish-section}, which supports both the working-set and queueish properties, is equivalent to having two time fingers $t_1=0, t_2=+\infty$. 
In Section~\ref{fingers-section}, we present a priority queue that satisfies the property for a constant number of time fingers.




\section{From the working-set bound to the unified bound \label{unified-from-working-set}}

The {\em static finger} property states that, for any fixed element $f$ (the finger), the amortized time to access an element $a$ is $\oh{\lg(d(a,f)+2)}$, where 
$d(a,f)$ is the rank difference between $a$ and $f$. More specifically, for a sufficiently long sequence of accesses $x_1, x_2, \ldots, x_m$, the 
total access time is $$\oh{\Sum_{i=1}^m \lg(d(x_i,f)+2)}.$$ 

The static optimality property (entropy bound) states that, for a sequence of accesses $x_1, x_2 \ldots, x_m$, where the element corresponding to $x_i$ is accessed $q(x_i)$ times in the entire sequence, the total access time is $$\oh{\Sum_{i=1}^m \lg\paren{\frac{m}{q(x_i)}+1}}.$$

The working set $w_X(i)$ of an operation $x_i$ in sequence $X$, which accesses element $x_i$, is defined as the number of distinct items accessed since the last access to $x_i$. The working-set property states that, for a sufficiently long sequence of accesses $x_1, x_2 \ldots, x_m$, the total access time is $$\oh{\Sum_{i=1}^m\lg(w_X(i)+2)}.$$

We observe that the working-set bound of two sequences does not asymptotically change when those two sequences are arbitrarily interleaved. We state this theorem formally below and prove it in Appendix~\ref{sec:mws}. This result will be needed to prove the main claim of this section, and is interesting in its own right.

\begin{theorem} \label{th:merge}
Let $X$ be a search sequence and 
let $Y$ and $Z$ be two subsequences of $X$ that partition $X$. Stated another way, $X$ is an interleaving of $Y$ and $Z$. Then, 

$$\sum_{i=1}^{|X|} \lg (w_X(i)+2 ) = \Theta\left( 
\sum_{i=1}^{|Y|} \lg (w_Y(i)+2 )+
\sum_{i=1}^{|Z|} \lg (w_Z(i)+2 )
\right).$$
\end{theorem}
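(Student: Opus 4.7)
The plan is to show the two directions of the $\Theta$ separately. The lower bound $\sum_j \lg(w_Y(j)+2) + \sum_k \lg(w_Z(k)+2) \le \sum_i \lg(w_X(i)+2)$ follows from a subset observation: for the $j$-th access of $Y$ occurring at $X$-position $i_j$, every distinct $Y$-element counted by $w_Y(j)$ was also accessed in $X$ since the previous $X$-occurrence of $x_{i_j}$, so $w_Y(j) \le w_X(i_j)$, giving $\lg(w_Y(j)+2) \le \lg(w_X(i_j)+2)$ termwise; the symmetric inequality holds for $Z$.

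For the (harder) upper bound I decompose each $X$-working-set using the disjointness of the universes of $Y$ and $Z$. For the $j$-th $Y$-access at $X$-position $i_j$, let $c_j$ denote the number of distinct $Z$-elements accessed in $X$ since the last $X$-occurrence of $x_{i_j}$; symmetrically define $d_k$ for the $k$-th $Z$-access. Then $w_X(i_j) = w_Y(j) + c_j$ exactly, and the elementary inequality $\lg(a+b+2) \le \lg(a+2) + \lg(b+2)$ (valid for $a,b \ge 0$) yields
$$\sum_i \lg(w_X(i)+2) \;\le\; \sum_j \lg(w_Y(j)+2) + \sum_k \lg(w_Z(k)+2) + \sum_j \lg(c_j+2) + \sum_k \lg(d_k+2),$$
so it suffices to show $\sum_j \lg(c_j+2) + \sum_k \lg(d_k+2) = O\bigl(\sum_j \lg(w_Y(j)+2) + \sum_k \lg(w_Z(k)+2)\bigr)$.

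Proving this residual inequality is the main obstacle, and I would attack it with a charging argument keyed to first-appearances. If $c_j \ge 2$, order the distinct $Z$-elements appearing in the $j$-th $Z$-window by their first-appearance position, $k_1(j) < k_2(j) < \cdots < k_{c_j}(j)$. Since the previous $Z$-access to the element of $Z[k_\ell(j)]$ lies outside the window while the $\ell-1$ distinct elements $Z[k_1(j)],\ldots,Z[k_{\ell-1}(j)]$ are accessed in between, we have $w_Z(k_\ell(j)) \ge \ell-1$. Setting $k^*(j) := k_{\lceil c_j/2\rceil}(j)$ therefore yields $\lg(w_Z(k^*(j))+2) = \Omega(\lg(c_j+2))$, so each term $\lg(c_j+2)$ could in principle be charged to $\lg(w_Z(k^*(j))+2)$. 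The hard part is bounding the total charge received by any single $Z$-access; my plan is to split into the easy case $c_j \le w_Y(j)$ (absorbed by $\sum_Y$, since $\lg(c_j+2) = O(\lg(w_Y(j)+2))$) and the complementary case, where I would group the remaining $Y$-accesses by the $X$-run between consecutive $Z$-accesses they belong to, observe that within such a run at most one $Y$-access per distinct $Y$-element can designate any given $k^*$, and exploit the geometric growth of $w_Z(k^*(j))$ across the possible covering windows at a fixed $Z$-position to cap the aggregate charge at $O(\lg(w_Z(k)+2))$. The corresponding bound for $\sum_k \lg(d_k+2)$ is handled by the symmetric argument with the roles of $Y$ and $Z$ swapped.
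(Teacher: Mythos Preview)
Your proposal has the right skeleton---the $\Omega$ direction is the easy one, and the $O$ direction should go by charging the ``excess'' from $Y$ to $Z$---but there are two real problems.

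First, you are tacitly assuming that the element universes of $Y$ and $Z$ are disjoint. The theorem only partitions the \emph{positions} of $X$; the same item may occur in both $Y$ and $Z$ (and in the paper's own application it does). Without disjointness the termwise inequality $w_Y(j)\le w_X(i_j)$ can fail: take $X=a,b,c,a,d,a$ with $Y$ at positions $\{1,2,3,6\}$ and $Z$ at $\{4,5\}$; then $w_Y(4)=2$ but $w_X(6)=1$, because the intervening $Z$-access to $a$ shrinks the $X$-window. Your ``exact'' decomposition $w_X(i_j)=w_Y(j)+c_j$ likewise breaks, though the inequality $w_X(i_j)\le w_Y(j)+c_j$ does survive if $c_j$ is read as ``distinct items appearing at a $Z$-position in the $X$-window''.

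Second, and more seriously for the hard direction, the per-index charging you sketch cannot work. You send $\lg(c_j+2)$ to a single $Z$-index $k^*(j)$ and then try to cap the aggregate charge at each $k$ by $O(\lg(w_Z(k)+2))$ via ``geometric growth across covering windows''. Consider $X=e_1,\ldots,e_n,\,z_1,\ldots,z_m,\,e_1,\ldots,e_n$ (all items distinct, $m>n$), with $Y$ the $e$-positions and $Z$ the $z$-positions. Every re-access in the second $Y$-block has $c_j=m>w_Y(j)=n-1$, the \emph{same} $Z$-window, and hence the same $k^*=\lceil m/2\rceil$; that single $Z$-index receives total charge $n\lg(m+2)$, while $\lg(w_Z(k^*)+2)=O(\lg m)$. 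There is no geometric growth to exploit here, and your ``one per distinct $Y$-element per run'' observation only gives the useless bound $n$. The overall inequality does hold in this example (globally $\sum_Z=\Theta(m\lg m)\ge n\lg m$), which is exactly the point: you need a global accounting, not a per-index one. The paper's proof achieves this by bucketing the bad $Y$-indices by $\lfloor\lg\lg w_X\rfloor$, spreading each one over the whole top half $W'_X$ of its $X$-working-set (most of which are $Z$-positions with large $w_Z$), and bounding the covering multiplicity per bucket; this yields $\sum_i|A_i|\cdot 2^i=O\bigl(\sum_k\lg(w_Z(k)+2)\bigr)$ directly, without ever claiming a per-$k$ bound.
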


Iacono~\cite{iacono-thesis} observed that the working-set property implies the static-optimality and static-finger properties. Therefore, the working-set property is the strongest of the three properties. The unified bound indicates an apparently stronger property than the three properties.
The unified bound states that the total time for a sufficiently long sequence of accesses $X=x_1, x_2 \ldots, x_m$ and any fixed finger $f$ is
\begin{equation} \label{eq:unified} \oh{\Sum_{i=1}^m \lg\min\set{d(x_i,f)+2, \, \frac{m}{q(x_i)} + 1, \, w(i)+2 }}.
\end{equation}

However, we show next that the working-set property is asymptotically as strong as the unified property.

\begin{theorem}
The working-set bound is asymptotically equivalent to the unified bound.
\end{theorem}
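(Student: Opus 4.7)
One direction is immediate: since the working-set term $\lg(w_X(i)+2)$ is one of the three quantities being minimized in~(\ref{eq:unified}), the unified bound is at most the working-set bound termwise and thus in total.

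For the reverse direction, the plan is to partition $X$ into three disjoint subsequences according to which of the three terms attains the per-operation minimum, with ties broken arbitrarily: place index $i$ in $X_f$ if $\lg(d(x_i,f)+2)$ is smallest, in $X_o$ if $\lg(m/q(x_i)+1)$ is smallest, and in $X_w$ if $\lg(w_X(i)+2)$ is smallest. Two successive applications of Theorem~\ref{th:merge} yield
\[
\sum_{i=1}^{|X|}\lg(w_X(i)+2) \;=\; \Theta\!\left(\sum_{S\in\{X_f,X_o,X_w\}}\sum_{i=1}^{|S|}\lg(w_S(i)+2)\right),
\]
reducing the task to bounding each subsequence's working-set sum by the corresponding contribution to the unified bound. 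On $X_w$, this is immediate from the monotonicity $w_{X_w}(i)\le w_X(i)$, since $\lg(w_X(i)+2)$ is itself the unified term on $X_w$. On $X_f$, one invokes Iacono's known implication that on any sequence the working-set bound is asymptotically dominated by the static-finger bound, applied to $X_f$ itself: $\sum_i\lg(w_{X_f}(i)+2)=O\!\left(\sum_i\lg(d(x_i,f)+2)\right)$, which matches the unified contribution on $X_f$.

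The delicate case, which I expect to be the main obstacle, is $X_o$. The analogous implication working-set $\Rightarrow$ static-optimality, proved by Jensen's inequality, applied to the subsequence $X_o$ yields a \emph{local} entropy bound $O\!\left(\sum_i \lg(|X_o|/q_{X_o}(x_i)+1)\right)$, whereas the unified bound uses the \emph{global} quantities $\sum_i\lg(m/q(x_i)+1)$. Since $|X_o|\le m$ and $q_{X_o}(x)\le q(x)$, these two expressions are not comparable a priori. The extra structure to exploit is that every $i\in X_o$ satisfies $m/q(x_i)+1\le w_X(i)+2$, forcing $q(x_i)$ to be large relative to $m/(w_X(i)+1)$. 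A per-element Jensen accounting that carefully splits each element's total contribution between its occurrences in $X_o$ and those outside should absorb the local entropy into the global one, up to a constant factor, completing the argument.
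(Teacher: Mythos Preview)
Your overall strategy---split $X$ according to which term realizes the minimum, invoke Theorem~\ref{th:merge} to reduce to per-subsequence working-set sums, and then bound each by the corresponding unified contribution---is exactly the paper's approach; the paper merely uses a two-way split (working-set minimum versus everything else) where you use three. Your treatment of $X_w$ and $X_f$ is correct.

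The gap is your $X_o$ case. You flag it as the main obstacle and propose to exploit the constraint $m/q(x_i)+1\le w_X(i)+2$ via some per-element Jensen accounting. That direction is unnecessary and does not obviously lead anywhere. The clean resolution is Gibbs' inequality (cross-entropy is at least entropy): set $w(x)=q_X(x)/m$, so $\sum_x w(x)=1$; then
\[
\sum_{i\in X_o}\lg\frac{m}{q_X(x_i)} \;=\; |X_o|\sum_x p(x)\lg\frac{1}{w(x)} \;\ge\; |X_o|\sum_x p(x)\lg\frac{1}{p(x)} \;=\; \sum_{i\in X_o}\lg\frac{|X_o|}{q_{X_o}(x_i)},
\]
where $p(x)=q_{X_o}(x)/|X_o|$ is the empirical distribution on $X_o$. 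Combined with the trivial $\sum_{i\in X_o}\lg(m/q_X(x_i)+1)\ge |X_o|$, this shows the global-frequency sum dominates the local static-optimality bound on $X_o$, after which Iacono's implication (working-set $=O(\text{static optimality})$) finishes, exactly as for $X_f$.

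This is precisely what the paper does on its $\overline{Y}$: it assigns weights $\max\{q_X(x)/m,\,1/(d(x,f)+2)^2\}$ (the max of your two weight systems, still summing to $O(1)$), applies the same cross-entropy step to pass to the local entropy on $\overline{Y}$, and then invokes the same implication. So once you plug in Gibbs' inequality for $X_o$, your three-way argument and the paper's two-way argument are the same proof, differing only in whether the finger and frequency cases are merged before or after the cross-entropy step.
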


\begin{proof}
Clearly, the unified bound implies the working-set bound as $$\Sum_{i=1}^m \lg\min\set{d(x_i,f)+2, \, \frac{m}{q(x_i)} + 1, \, w(i)+2 } \,\le \, \Sum_{i=1}^m\lg(w(i)+2).$$

Let $Y$ be the subsequence of $X=x_1, x_2, \ldots x_m$ consisting of those elements $x_i$ where
$w(i)+2  = \min\set{d(x_i,f)+2, \, \frac{m}{q(x_i)} + 1, \, w(i)+2 }$, and let $\overline{Y}$ be the subsequence of $X$ crated by removing $Y$. Let $m(i)=j$ if $y_i$ is $x_j$, and let $\overline{m}(i)=j$ if $\overline{y}_i=x_i$. We also subscript the $w(\cdot)$, $d(\cdot,\cdot)$ and $q(\cdot)$ to explicitly indicate which sequence these measures are with respect to. Then:

\begin{align}
\label{eq2}
&\Sum_{i=1}^m \lg\min\set{d_X(x_i,f)+2, \, \frac{m}{q_X(x_i)} + 1, \, w_X(i)+2 } 
\\ 
\label{eq3}
 &= \Sum_{i=1}^{|Y|} \lg (w_Y({m(i)})+2) 
+\Sum_{i=1}^{|\overline{Y}|} \lg\min\set{d(x_{\overline{m}(i)},f)+2, \, \frac{m}{q(x_{\overline{m}(i)})} + 1 }  
\\
\label{eq4}
 & = \Omega\left( \Sum_{i=1}^{|Y|} \lg (w_Y({m(i)})+2) 
+ \Sum_{i=1}^{|\overline{Y}|} \lg  \frac{\oh{1}}{\max \set{ \frac{q(x_{{m}(i)})}{m},\frac{1}{(d(x_{{m}(i)},f)+2)^2}  } } \right)
\\
\label{eq5}
&= \Sum_{i=1}^{|Y|} \lg (w_Y({m(i)})+2) 
+\Omega \left( \Sum_{i=1}^{|\overline{Y}|} \lg  \left( \frac{m}{q_{\overline{Y}}(x_{\overline{m}(i)})} + 1 \right)  \right) 
\\
\label{eq6}
&= \Sum_{i=1}^{|Y|} \lg (w_Y({m(i)})+2) 
+\Omega \left( \Sum_{i=1}^{|\overline{Y}|} \lg (w_{\overline{Y}}({\overline{m}(i)})+2)  \right) 
\\
\label{eq7}
 &= \Omega \left( \sum_{i=1}^{m} \lg (w_X(i)+2)  \right)
\end{align}

Equation~\ref{eq3} splits one sum into two using the partitioning of $X$ into $Y$ and $\overline{Y}$. The left term of Equation~\ref{eq4} is obtained by replacing $w_X$ with $w_Y$, which only can cause a decrease. The right sum of Equation~\ref{eq4} is a static-optimality type formula, which is asymptotically the same as the right sum of Equation~\ref{eq3}. This has the property that if $y_i=y_j$, the $i$th and $j$th term of this sum are identical, and we call this value the \emph{weight} of an element. The $O(1)$ numerator is chosen so that the weights sum to 1. Since this is just a static-optimality type weighting scheme, this sum is at least the entropy of the frequencies, and this observation leads to Equation~\ref{eq5}. To get from Equation~\ref{eq5} to Equation~\ref{eq6} the fact that the static-optimality bound is big-Omega of the working-set bound; this is Theorem~10 of~\cite{iacono-thesis}. Moving from Equation~\ref{eq6} to Equation~\ref{eq7} requires the observation that the sum of the working-set bounds of two sequences of operations are asymptotically the same as the working-set bound of the interleaving of two sequences; this is stated above and proved in Appendix~\ref{sec:mws} as Theorem~\ref{th:mergeappendix}.

\end{proof}

\section{A priority queue with the working-set property\label{review-section}}

Our priority queue builds on the priority queue in~\cite{elmasry}, which supports insertion in  constant time while the minimum deletion fulfills the 
working-set bound. The advantage of the priority queue in~\cite{elmasry} over those in~\cite{funnel-heap,pairing-heaps,iacono-pairing-heaps} 
is that it satisfies the stronger working-set property in which elements that are deleted do not count towards the working sets.
Next, we outline the structure of this priority queue.

The priority queue in~\cite{elmasry} comprises heap-ordered $(2,3)$ binomial trees. As defined in~\cite{elmasry}, the subtrees of the root of a $(2,3)$ binomial tree of rank $r$ are $(2,3)$ binomial trees; there are one or two children having ranks $0, 1, \dots, r-1$, ordered non-decreasing from right to left. It is trivial to verify that the rank $r$ of an $n$-node $(2,3)$ binomial tree is $\thetah{\lg n}$. Figure~\ref{fig:binomial} illustrates the recursive structure of a $(2,3)$ binomial tree.

\begin{figure}[t]
\centering\includegraphics[scale=0.4]{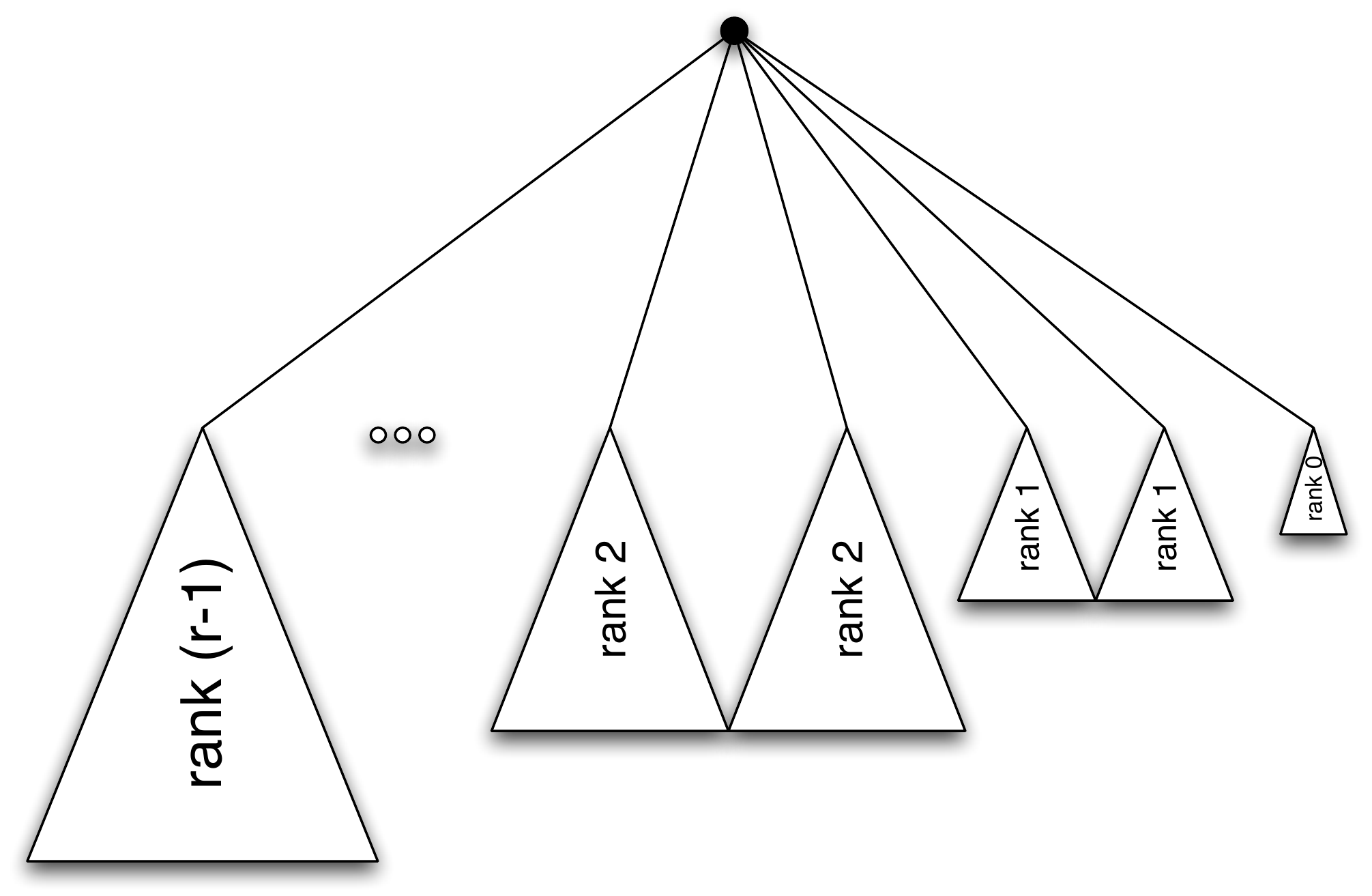}
\caption{The recursive structure of a $(2,3)$ binomial tree of rank $r$: Subtrees rooted at the children of the root are $(2,3)$ binomial trees of ranks $0,1,\ldots, r-1$. The sequence of ranks 
forms a non-decreasing sequence from right to left such that each value from $0, 1,\ldots, r-1$ occurs either once or twice.\label{fig:binomial}} 
\end{figure}

The ranks of the $(2,3)$ binomial trees of the priority queue are as well non-decreasing from right to left. 
For the amortized solution, there are at most two trees per rank. The main obstacle against achieving the bounds in the worst case is the possibility that a long 
sequence of consecutive ranks would have no corresponding trees. To overcome this problem, in the worst-case solution, the number of trees per rank obey an extended-regular number system
that imposes stronger regularity constraints, which implies that the ranks of any two adjacent trees differ by at most $2$ (see~\cite{elmasry} for the details). 

The root of every $(2,3)$ binomial tree has a pointer to the root with the minimum value among those to the left of it. Such {\em prefix-minimum} pointers allow for finding the overall minimum 
element in constant time, with the ability to maintain such pointers after deleting the minimum in time proportional to the rank of the deleted node. 
Figure~\ref{fig:oneSided} illustrates how various $(2,3)$ binomial trees constitute the priority queue.
\begin{figure}[t]
\centering\includegraphics[scale=0.5]{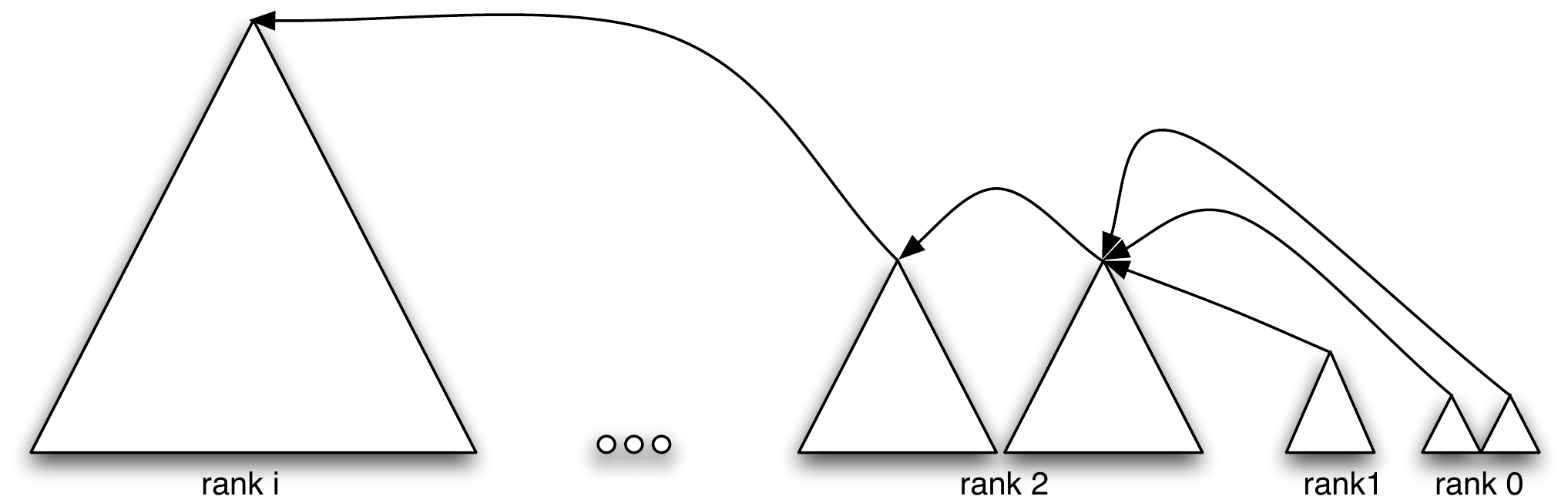}
\caption{$(2,3)$ binomial trees comprise the priority queue: The rank of the trees are non-decreasing from right to left. This figure illustrates the amortized solution where there are at most two trees per rank. In the worst-case solution, the number of trees of each rank follows a much stricter number system. 
Prefix-minimum pointers are maintained at the root of the trees. Each tree root points to the minimum root to the left of it.\label{fig:oneSided}}
\end{figure}

A total order is maintained indicating the time the elements were inserted. We impose that across binomial trees, if binomial tree $T_1$ is to the right of another $T_2$, then all elements in $T_1$ have been inserted after those in $T_2$. Furthermore within an individual binomial tree, the preorder ordering of elements with a right-to-left precedence to subtrees must be chronologically consistent with the insertion time of these elements. When performing operations, we occasionally disobey this ordering by reversing the order of two entire subtrees. We mark these points by maintaining a {\it reverse bit} with every node $x$; such reverse bit indicates whether the elements in $x$'s subtree were inserted before or after the elements in $x$'s parent and those in the descendants of the right siblings of $x$. 

Two primitive operations are {\it split} and {\it join}. A tree of rank $r$ is split to two or three trees of rank $r-1$; this is done by detaching the one or two children of the root having rank $r-1$.
On the other hand, two or three trees of rank $r-1$ can be joined to form a tree of rank $r$; this is done by making the root(s) with the larger value the leftmost child(ren) of the other, and setting the reverse bit(s) correctly. To join a tree of rank $r-1$ and a tree of rank $r-2$, we split the first tree then join all the resulting trees; the outcome is a tree which has rank either $r-1$ or $r$.  
With these operations in hand, it is possible to detach the root of a $(2,3)$ binomial tree of rank $r$ and reconstruct the tree again as a $(2,3)$ binomial tree with rank $r-1$ or $r$; this is done by repeated joins and splits starting from the rightmost subtrees of the deleted root to the leftmost (see~\cite{elmasry} for the details). 

To {\em insert} an element, a new single node is added as the rightmost tree in the priority queue. This may give rise to several links once there are three trees with the same rank; the number of such links is amortized as a constant, resulting in the constant amortized cost per insertion. After every link, the prefix-minimum pointer of the surviving root may need to be updated. For the worst-case solution, the underlying number system guarantees at most one join per insert. 

To perform {\em delete-min}, the tree $T$ of the minimum root is identified via the prefix-minimum pointer of the rightmost root, the tree $T$ is reconstructed  as a $(2,3)$ binomial tree after detaching its root. This may be followed by a split and a join if $T$ has rank one less that its original rank. Finally, the prefix-minimum pointers are updated. For the amortized solution, several splits of $T$ may follow the delete-min operation. Starting with $T$, we repeatedly split the rightmost tree resulting from previous splits until such tree and its right neighbor (the right neighbor of $T$ before the delete-min) have consecutive ranks; this splitting is unnecessary in the worst-case solution. It is not hard to conclude that the cost of delete-min is $O(r)$, where $r$ is rank of the deleted node. In the worst-case solution, the rank of the deleted node $x$ is $O(\lg{(w_x+2)})$. For the amortized solution, an extra lemma would prove the same bound in the amortized sense.  

\section{Supporting delete within the working-set bound\label{delete-section}}

The existing distribution-sensitive priority queues~\cite{funnel-heap,elmasry,iacono-pairing-heaps,iacono-thesis,queaps} do 
not support {\em delete} within the working-set bound. In this section, we modify the priority queue outlined in section~\ref{review-section} to
support deletion within the working-set bound. 

Including delete in the repertoire of operations is not hard but should be done carefully. The major challenge  is to correctly maintain the total order imposed by the reverse bits following deletions. 

We start by traversing upwards via the parent pointers from the node $x$ to-be-deleted until the root of the tree of $x$ is reached. Then starting at this root, the current subtree is repeatedly split into two or three trees, one or two of them are pushed to a stack while continuing to split the tree that contains $x$, until we end up with a tree whose root is $x$. At this stage, we delete $x$ analogously to the delete-min operation; the node $x$ is detached and the subtrees resulting from removing $x$ are
incrementally joined from right to left, while possibly performing one split before each join (similar to the delete-min). 

We now have to work our way up to the root of the tree and merge all subtrees which we have introduced by splits on the way down from the root. The one or two trees that have the same rank are repeatedly popped from the stack and joined with the current tree, while possibly performing one split before each join (as required for performing a join operation). Once the stack is empty, a split and a join may be performed if the resulting tree has rank one less that its original rank (again analogously to the delete-min operation). 

The total order is correctly maintained by noting that the only operations employed are the split and join, which are guaranteed to set the reverse bits correctly~\cite{elmasry}. Since the height of a $(2,3)$ binomial tree is one plus its rank, the time bound for delete is $O(r)$, where $r$ is the rank of the tree that contains the deleted node. This estqablishes the same time bound as that for delete-min in both the amortized and worst-case solutions (see~\cite{elmasry} for the details).

\begin{theorem}
The priority queue presented in this section performs find-min and insert in constant time, and both delete and delete-min of an element $x$ in $\oh{\lg \paren{w_x+2}}$ time , where $w_x$ is the number of elements inserted after $x$ and are still present at the time of $x$'s deletion.
\end{theorem}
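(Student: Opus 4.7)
The plan is to verify each claimed bound in turn, leveraging the algorithmic descriptions already given in Sections~\ref{review-section} and~\ref{delete-section}. The find-min bound is immediate from the prefix-minimum pointer stored at the rightmost root, which directly yields the minimum over all tree roots and hence the minimum of the heap. For insert, constant time follows in the amortized setting from a potential argument charging links against prior insertions (each link decreases the number of trees by one), and in the worst-case setting from the extended-regular number system, which forces at most one join per insert.

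The substantive claim is that delete and delete-min each cost $O(\lg(w_x+2))$. I would first argue that both algorithms perform $O(r)$ primitive split/join operations plus one $O(r)$ prefix-minimum update, where $r$ is the rank of the tree $T$ containing $x$ at the moment of deletion; for delete-min this is essentially the argument of~\cite{elmasry}, while for delete it follows from Section~\ref{delete-section}'s description (walk up from $x$ to the root splitting to isolate $x$, then walk back up rejoining). Correctness of the total order under these operations rests on the split and join primitives setting reverse bits correctly, which is inherited from~\cite{elmasry}. So it suffices to show $r = O(\lg(w_x+2))$.

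The structural claim that drives the rank bound is: \emph{at the time of $x$'s deletion, every node in $T$ other than $x$ was inserted strictly after $x$.} Granted this, we have $|T| \le w_x + 1$, and since $T$ is a $(2,3)$ binomial tree of rank $r$ its size is $\Omega(2^r)$, yielding $r = O(\lg(w_x+2))$. To establish the claim I would induct on the sequence of operations following $x$'s insertion. At insertion $T=\{x\}$ trivially satisfies the property. A new insert appears as a singleton tree to the right of $T$, and the chronological invariants of the data structure (rightmost tree is newest across trees; preorder with right-to-left precedence within a tree) guarantee that any tree later joined into $T$ consists entirely of post-$x$ elements. Splits only shrink $T$ or leave its contents a subset of the previous contents, so the invariant is preserved. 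The prefix-minimum updates do not touch the multiset of elements in any tree.

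The technical crux, and likely the main obstacle, is checking that during an unrelated delete or delete-min the cascade of splits and joins never causes $T$ to absorb a tree lying to the \emph{left} of $T$ in the order (which would hold elements older than $x$). This is where the extended-regular number system and the specific left-to-right discipline on cascading links in~\cite{elmasry} must be invoked: joins that involve $T$ are constrained to same-rank neighbors that inherit the chronological ordering, and the reverse-bit machinery ensures the merged tree's preorder still respects insertion time. Once this is formalized by appealing to the invariants of the underlying construction, the rank bound follows and the theorem is complete.
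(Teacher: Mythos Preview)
The core of your argument is the structural claim that ``at the time of $x$'s deletion, every node in $T$ other than $x$ was inserted strictly after $x$.'' This claim is false. Consider inserting $a,b,c,d,e,f$ in order in the amortized variant: after $c$ we have one rank-$1$ tree $\{a,b,c\}$; after $e$ we have $\{a,b,c\},\{d\},\{e\}$; inserting $f$ creates three rank-$0$ trees $\{d\},\{e\},\{f\}$, which are joined, so $e$ now sits in the tree $\{d,e,f\}$ together with $d$, which was inserted \emph{before}~$e$. More generally, whenever three same-rank trees are linked, an element in the middle tree acquires, in its new tree, all elements of the left neighbour---and by the across-tree chronological invariant those are strictly older. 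Your inductive step ``any tree later joined into $T$ consists entirely of post-$x$ elements'' overlooks exactly this case, and the difficulty is not confined to delete cascades as you suggest; it already appears in ordinary insertion links (and in the single lazy join of the worst-case variant).

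The paper does not try to prove $r=O(\lg(w_x+2))$ from scratch here; it only argues that delete runs in $O(r)$ time via the split/join walk you correctly describe, and then defers the rank bound to~\cite{elmasry}. The argument there does not bound $|T|$ by $w_x+1$. Instead it uses the invariant you quote but never exploit: in the worst-case solution adjacent trees differ in rank by at most~$2$. Thus if $T$ has rank $r$ and is not rightmost, the tree $T'$ immediately to its right has rank at least $r-2$ and hence at least $2^{r-2}$ nodes, \emph{all} of which were inserted after every element of $T$ and in particular after $x$; this yields $w_x\ge 2^{r-2}$ directly. When $T$ is rightmost the same regularity forces $r=O(1)$. (In the amortized variant the extra splitting after delete-min and a separate lemma play the analogous role.) The route to the bound, then, is through the neighbouring tree, not through the contents of $T$ itself.
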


\section{Incorporating the queueish property\label{queueish-section}}

The queueish property for priority queues states that the time to perform delete or delete-min on an element $x$ is $\oh{\lg\paren{n-w_x+2}}$, where $n$ is the number of elements currently present in the priority queue, and $w_x$ is the number of elements inserted following the insertion of $x$ and are still present. In other words, the queueish property states that the time to perform delete or delete-min on an element $x$ is $\oh{\lg\paren{q_x+2}}$, where $q_x = n -w_x$ is the number of elements inserted prior to insertion of $x$ and are still present in the priority queue. 
Queaps~\cite{queaps} are queueish priority queues that support insert in amortized constant time and support delete-min of an element $x$ in amortized $\oh{\lg(q_x+2)}$ time. 

We extend our priority queue with the working-set bound to also support both delete and delete-min within the queueish bound. Accordingly, the priority queue simultaneously satisfies both the working-set and the queueish properties.
Instead of having the ranks of the trees of the queue non-decreasing from right to left, we split the queue in two sides, a right queue and a left queue, forming a double-ended priority queue. The ranks of the trees of the right queue are monotonically non-decreasing from right to left (as in the previous section), and those of the left queue  
are monotonically non-decreasing from left to right. We also impose the constraint that the difference in rank between the largest tree on each side is at most one.  Figure~\ref{fig:doubleEnded} depicts the new priority queue.
\begin{figure}[t]
\centering\includegraphics[scale=0.6]{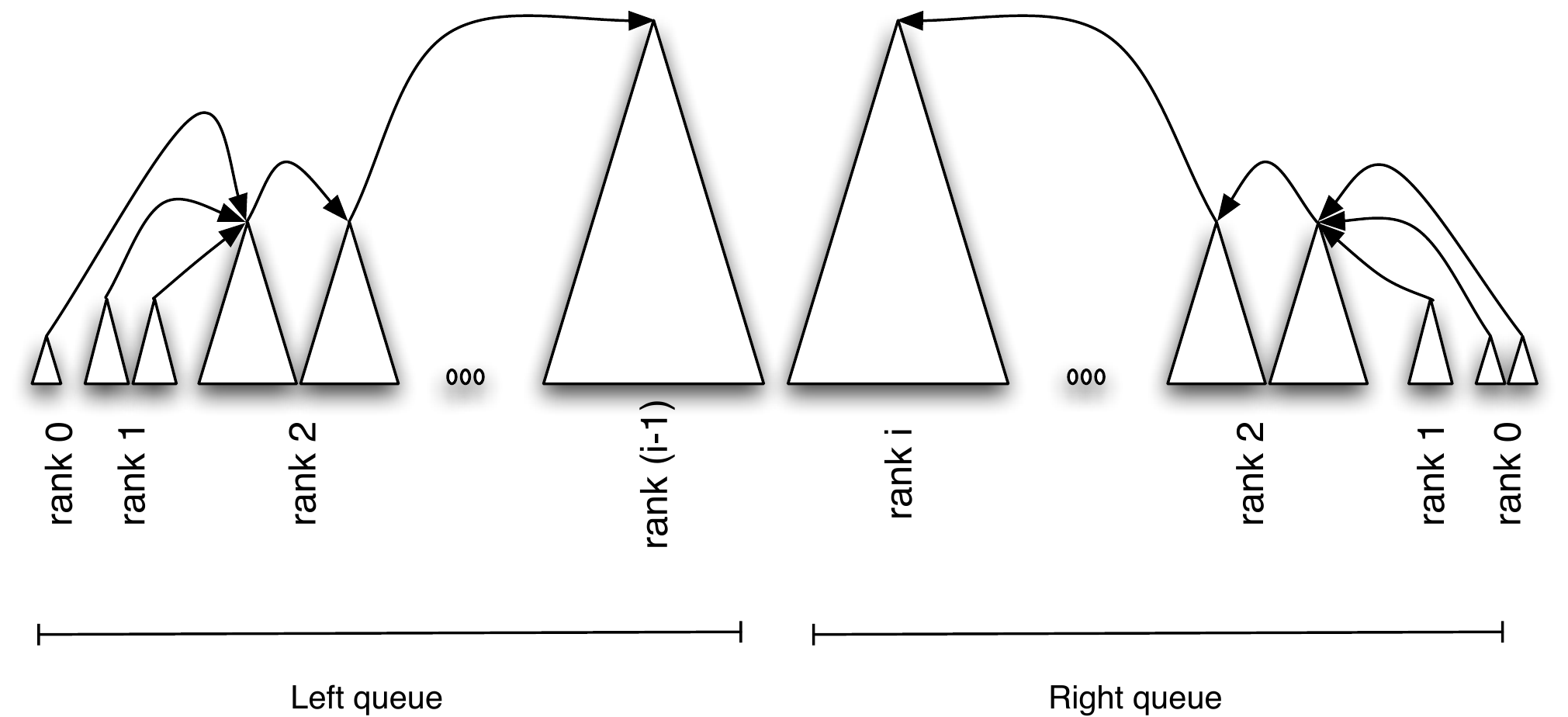}
\caption{The priority queue satisfying the queueish property: the priority queue comprises of two queues one of which has tree ranks increasing from right to left (right queue) and one increasing from left to right (left queue). We also maintain that the ranks of the largest trees in the queues must differ by at most one. The prefix-min pointers 
are stored independently in both sides.
\label{fig:doubleEnded}} 
\end{figure}

The prefix-minimum pointers in the left and right queues are kept independently. In the right queue, the root of each tree maintains a pointer to the root with the minimum value among those in the right queue to the left of it. Conversely, in the left queue, the root of each tree maintains a pointer to the root with the minimum value among those in the left queue to the right of it. To find the overall minimum value, both the left and right queues are probed.

Insertions are performed exactly as before in the right queue. The delete-min operation is performed in the left or right queue depending on where the minimum lies. Deletions are also performed as mentioned in the previous section. 

However, we must maintain the invariant that the difference in rank between the largest tree in the left and right sides is at most one.
Since the total order is maintained among our trees, this invariant guarantees that the rank of the tree of an element $x$ is $\oh{\lg(\min\set{w_x, q_x}+2)}$.
As a result of an insertion or a deletion, the difference in such ranks may become two. Once the largest rank on one side is two more than that on the other side, the trees with such largest rank are split each in two or three trees, and the appropriate tree among the resulting ones is moved to the other side, increasing the largest rank on the second side by one. As a result of those splits, the number of trees of the same rank on the first side may now exceed the limit, and hence a constant number of joins would be needed to satisfy the constraints. 
Once a tree is moved from one side to the other, the prefix-minimum pointers of the priority queue on the second side need to be updated. Because such action happens only after a lot (linear) number of operations, updating the prefix-minimum pointers only accounts for a constant extra in the amortized cost per operation. If we want to guarantee the costs in the worst case, updating those prefix-minimum pointers is to be done incrementally with the upcoming operations. 

A deletion of a node $x$ in a tree of rank $r$ would still cost $O(r)$ time, but now $r = O(\lg{(\min\set{w_x, q_x}}+2))$ in the amortized sense (for the amortized solution) or in the worst-case sense (for the worst-case solution).

\begin{theorem}\label{thm:double-ended}
	The priority queue presented in this section performs find-min and insert in constant time, and both delete and delete-min of an element $x$ in $\oh{\lg(\min\set{w_x, q_x}+2)}$ time, where $w_x$ and $q_x$ are the number of elements inserted after, respectively before, $x$ and are still present at the time of $x$'s deletion.
\end{theorem}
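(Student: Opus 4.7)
The plan is to lift the analysis of Section~\ref{delete-section} from a single queue to the double-ended structure. Insert and find-min are essentially unchanged: an insertion still takes place only on the rightmost end of the right queue and triggers at most one join under the extended-regular number system, while the overall minimum is read off in constant time by comparing the two appropriate prefix-minimum pointers of the two queues. For delete and delete-min, the work done on the tree containing $x$, given that this tree has rank $r$, is still $O(r)$ by exactly the argument already established in Sections~\ref{review-section} and~\ref{delete-section}. The entire task therefore reduces to bounding $r$ in terms of $w_x$ and $q_x$.

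The key combinatorial step is the bound $r=\oh{\lg(\min\set{w_x,q_x}+2)}$. Assume without loss of generality that $x$ lies in the right queue; the other case is symmetric. The working-set analysis of Section~\ref{review-section}, applied to the right queue alone, still goes through because the rightward chronological order inside the right queue is preserved, and it yields $r=\oh{\lg(w_x+2)}$. For the bound involving $q_x$, let $R_R$ and $R_L$ denote the largest ranks on the right and left sides, respectively. Then $r\le R_R$ and, by the imposed invariant, $R_L\ge R_R-1\ge r-1$. Consequently the left queue contains a tree of rank at least $r-1$, and hence $\Omega(2^{r-1})$ elements. All of these elements were inserted before every element of the right queue, and in particular before $x$, so they are counted in $q_x$. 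This gives $q_x=\Omega(2^r)$, i.e., $r=\oh{\lg(q_x+2)}$. Combining the two bounds yields the claim.

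The hard part will be to show that enforcing the rank invariant and maintaining the chronological total order across the two queues do not spoil the stated bounds. A rebalance is needed only when an insertion or a deletion causes the two maximum ranks to differ by two, which cannot happen until at least $\Omega(2^{\min\set{R_R,R_L}})$ operations have acted on the smaller side since the last rebalance; the cost of the constant number of splits and joins triggered by one rebalance, together with the refresh of the prefix-minimum pointers on the receiving side, therefore amortizes to $O(1)$ per operation, or is spread incrementally over the upcoming operations for the worst-case solution, exactly as in~\cite{elmasry}. The more delicate point is that a subtree migrating from one queue to the other must have its preorder convention reversed (right-to-left precedence in the right queue becomes left-to-right precedence in the left queue); this is achieved by toggling the reverse bit at the root of the migrated tree, which is the same mechanism already used by split and join, so the combined chronological total order is preserved.
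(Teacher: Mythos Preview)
Your proposal tracks the paper's argument closely and fills in the details the paper leaves implicit. In particular, your derivation of $r=O(\lg(q_x+2))$ by lower-bounding $q_x$ with the $\Omega(2^{r-1})$ elements in the opposite queue is precisely the intended reading of the paper's one-line claim that ``this invariant guarantees that the rank of the tree of an element $x$ is $\oh{\lg(\min\set{w_x, q_x}+2)}$.''

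There is, however, one incorrect intermediate claim in your rebalance analysis. You assert that the rank gap cannot reach two ``until at least $\Omega(2^{\min\{R_R,R_L\}})$ operations have acted on the smaller side since the last rebalance.'' This is false: if $R_R=R_L+1$ and the left side has a unique tree of rank $R_L$, a \emph{single} deletion from that tree can drop its rank to $R_L-1$ and immediately produce a gap of two. The repair is to separate the two triggers. When an \emph{insertion} causes the gap, $R_R$ has just increased, which does require $\Omega(2^{R_R})$ insertions since the previous such event, and the $O(R_R)$ rebalance cost (splits, joins, and prefix-minimum refresh on the receiving side) amortizes to $o(1)$ per insertion. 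When a \emph{deletion} causes the gap, the deleted element necessarily sat in the unique largest tree on its side, so $r=\min\{R_R,R_L\}$, and the entire rebalance cost is $O(\max\{R_R,R_L\})=O(r+2)=O(r)$, which is absorbed directly into that deletion with no amortization needed. The paper's own sentence (``such action happens only after a lot (linear) number of operations'') is equally informal on this point, but your explicit $\Omega(2^{\min\{R_R,R_L\}})$ claim is a concrete statement that fails and should be replaced by this case split.
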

 
\section{Supporting multiple time fingers\label{fingers-section}}

In this section, we introduce a new distribution-sensitive property, which encapsulates both the working-set and the queueish properties.
We refer to this property as the {\em multiple time-fingers} property.
Time fingers $t_1, t_2, \ldots, t_c$ are points of time during the sequence of updates which are set and fixed as they arrive. In other words, 
as the time progresses with the sequence of operations at multiple occasions the user can specify the time being as a time finger. The elements inserted 
in the temporal vicinity of these time-fingers must be accessible fast.

We define, $w_x(t_i)$, the working-set of an element $x$ with respect to time finger $t_i$,  as the number of elements that have been 
inserted in the window of time between the insertion time of $x$ and time $t_i$ and are still present in the priority queue at the time of $x$'s deletion. 
We say a priority queue satisfies the multiple time-finger property if the time to perform delete or delete-min operations on an element $x$ is $O(\lg(\Min_{i=1}^c\set{w_x(t_i)}+2))$. 
Clearly, the priority queue we have presented so far corresponds to a priority queue with the time-finger property for 
two time fingers of $t_1=0$ (the queueish property) and $t_2=+\infty$ (the working-set property). In this section, 
we present a priority queue that  satisfies the time-finger property for any constant number of time fingers.

The structure consists of multiple double-ended priority queues of Section~\ref{queueish-section}. We start 
with a single copy of a double-ended priority queue $PQ_0$ at the beginning and at each point when a new time finger is 
introduced we finalize the priority queue and start a new one.  Therefore, corresponding to $c$ time-fingers $t_1=0, \ldots, t_c=\infty$, we 
have $c-1$ double-ended priority queues $PQ_1, \ldots, PQ_{c-1}$. 

Insertions are performed in the last (at the time when the insertion is performed) priority queue, and by Theorem~\ref{thm:double-ended} take constant time. For delete operations, 
we are given a reference to an element $x$ to delete, we determine to which priority queue $PQ_j$ the element belongs and delete it. 
This requires $\oh{\lg(\min\set{w_x(t_{j}),w_x(t_{j+1})}+2)}$ time, as indicated by Theorem~\ref{thm:double-ended}. 
Since $x$ belongs to $PQ_j$, for any $i < j$, $w_x(t_j) \le w_x(t_i)$, and for any $i > j+1$, $w_x(t_{j+1}) \le w_x(t_i)$. 
It follows that $\lg(\min\set{w_x(t_j),w_x(t_{j+1})}+2) = \lg(\Min_{i=1}^c\set{w_x(t_i)}+2)$.
For delete-min operation, it suffices to note that the find-min operation takes constant time in the double-ended priority queue 
of Theorem~\ref{thm:double-ended}. Therefore, we can determine in constant time which priority queue contains the minimum and perform the delete-min operation in there. 
The running time argument is the same as that for the delete operation.

\begin{theorem}
	Given a constant number of time fingers $t_1=0, t_2, \ldots, t_c=\infty$, the priority queue presented in this section performs find-min and insert in constant time, and both delete and delete-min of an element $x$ in $O(\lg(\Min_{i=1}^c\set{w_x(t_i)}+2))$ time, where $w_x(t_i)$ is the number of elements that have been inserted in the window of time between the insertion time of $x$ and time $t_i$ and are still present at the time of $x$'s deletion.
\end{theorem}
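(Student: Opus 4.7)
The plan is to reduce each operation on the combined structure to the corresponding operation on the appropriate one of the $c-1$ double-ended priority queues $PQ_1,\ldots,PQ_{c-1}$ (where $PQ_j$ holds the still-present elements whose insertion times lie in $(t_j,t_{j+1})$), invoke Theorem~\ref{thm:double-ended} to get a local bound, and then translate that local bound into the claimed bound involving $\min_{i=1}^c w_x(t_i)$.

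First I would handle the cheap operations. Insertions are always routed into the currently-active $PQ_j$ (the one corresponding to the latest time finger set so far), which takes worst-case constant time directly by Theorem~\ref{thm:double-ended}. For find-min, I would maintain an auxiliary pointer to whichever of the $c-1$ queues currently contains the global minimum; since $c$ is constant, this pointer can be repaired in $O(c)=O(1)$ time after any update simply by inspecting the per-queue minimum of each $PQ_j$.

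For delete, given a reference to $x$ I identify its owning queue $PQ_j$ in constant time (by storing with each node a pointer to its containing queue, set at insertion). Invoking the delete routine of Theorem~\ref{thm:double-ended} on $PQ_j$ costs $O(\lg(\min\{w_x^{PQ_j},q_x^{PQ_j}\}+2))$, where $w_x^{PQ_j}$ and $q_x^{PQ_j}$ are the local working-set and queue counts inside $PQ_j$. Because the elements of $PQ_j$ are exactly the still-present elements whose insertion times lie in $(t_j,t_{j+1})$, these local quantities equal $w_x(t_{j+1})$ and $w_x(t_j)$ respectively. Delete-min is handled identically, using the auxiliary global-min pointer to locate the relevant $PQ_j$ in constant time.

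The decisive step is the identity $\min\{w_x(t_j),w_x(t_{j+1})\}=\min_{i=1}^c w_x(t_i)$. I would prove this by monotonicity of $w_x(\cdot)$: since $w_x(t_i)$ counts those still-present elements whose insertion times lie in the time window between $x$'s insertion and $t_i$, enlarging this window can only make the count non-smaller. As $x$ was inserted in $(t_j,t_{j+1})$, every finger $t_i$ with $i\le j$ yields a window containing the one for $t_j$, so $w_x(t_i)\ge w_x(t_j)$, and symmetrically $w_x(t_i)\ge w_x(t_{j+1})$ for $i\ge j+1$. Hence the overall minimum is attained at $i\in\{j,j+1\}$, matching the bound from Theorem~\ref{thm:double-ended}. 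The main obstacle I anticipate is the bookkeeping at the moment a new time finger is installed online: the currently-active $PQ_j$ must be cleanly frozen (while still supporting future deletions, for which its invariants and prefix-min pointers must remain intact) and a new empty $PQ_{j+1}$ opened, all without disturbing the worst-case constant-time cost of the ongoing operation that triggered the finger.
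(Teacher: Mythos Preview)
Your proposal is correct and follows essentially the same route as the paper: route each operation to the appropriate $PQ_j$, invoke Theorem~\ref{thm:double-ended} there, identify the local $w_x^{PQ_j}$ and $q_x^{PQ_j}$ with $w_x(t_{j+1})$ and $w_x(t_j)$, and then use the monotonicity of $w_x(\cdot)$ in the window endpoint to conclude $\min\{w_x(t_j),w_x(t_{j+1})\}=\min_{i=1}^{c}w_x(t_i)$. Your treatment is slightly more explicit than the paper's (the auxiliary global-min pointer, the per-node owning-queue pointer, and the window-monotonicity justification are all spelled out), but the argument is the same.
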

 
\section{Conclusion and future work}

We presented a hierarchy of distribution-sensitive properties in Figure~\ref{implication-diagram}. We established that the working-set property is equivalent to the unified-bound property. The queueish property introduced by~\cite{queaps} is missing from the picture as it is neither derived by or implies  any other property. Nevertheless, we argued that it is a very natural distribution-sensitive property.

We considered the case of distribution-sensitive priority queues. 
Precisely speaking, we designed a priority queue that supports insertions in constant time and delete-min and delete operations in distribution-sensitive bounds.
Provably, priority queues cannot satisfy the sequential-access property and in accordance neither the dynamic-finger nor the unified properties. We therefore focused on other distribution-sensitive properties, namely: the working-set and the queueish properties. We presented a priority queue that satisfies both properties.
Our priority queue build on the priority queue of~\cite{elmasry}, which supports insertion in constant time and delete-min in the working-set time bound. We showed that the same structure can also support delete operations within the working-set bound. We then modified the structure to satisfy the queueish property as well. 
It is worthy to note that the priority queue designed supports the stronger definition of the working-set and the queueish properties in which the elements deleted do not influence the time bounds.

Our result about the equivalence of the working-set property and the unified-bound property then implies that our priority queue also satisfies the unified-bound, static-optimality and static-finger properties. 

We defined the notion of time fingers, which encapsulate the working-set and the queueish properties. The priority queue described thus far corresponds to a priority queue that supports two time fingers. We extended the support to any constant number of time fingers. 

The bounds mentioned are amortized. However, we showed that the time bounds for the working-set and queueish properties can also be made to work in the worst case. More generally, the multiple time-finger bounds can be made to work in the worst case. However, the time bounds for other properties: unified bound, static optimality, and static finger naturally remain amortized.  

As for future work, one key operation is still missing from the supported repertoire of operations; that is the {\em decrease-key} operation. We leave open the question of whether decrease-key operations can also be performed in constant time while supporting the distribution-sensitive bounds for delete operations.

\bibliographystyle{plain}
\bibliography{refs}

\vfill \pagebreak

\appendix

\section{The effect of merging sequences on the working-set bound}
\label{sec:mws}

\begin{theorem} \label{th:mergeappendix}
Let $X$ be a search sequence and 
let $Y$ and $Z$ be two subsequences of $X$ that partition $X$. Stated another way, $X$ is an interleaving of $Y$ and $Z$. Then, 

$$\sum_{i=1}^{|X|} \lg (w_X(i)+2 ) = \Theta\left( 
\sum_{i=1}^{|Y|} \lg (w_Y(i)+2 )+
\sum_{i=1}^{|Z|} \lg (w_Z(i)+2 )
\right).$$

\end{theorem}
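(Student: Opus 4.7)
The plan is to establish the asymptotic equivalence in both directions. For the easier lower-bound direction $\sum_j \lg(w_Y(j)+2)+\sum_k \lg(w_Z(k)+2) = O(\sum_i \lg(w_X(i)+2))$, I would argue pointwise when $Y$ and $Z$ have disjoint item sets: for each $y_j$ at $X$-position $i$, the previous $X$-access of $y_j$'s element is also a $Y$-access, so the $X$-window contains the $Y$-window in $X$-positions and has at least $w_Y(j)$ distinct items, giving $w_Y(j)\le w_X(i)$ termwise. The symmetric argument handles $Z$-accesses. For the shared-item case, pointwise monotonicity can fail, but any deficit in $\sum_{Y\text{-acc}} \lg(w_X+2)$ relative to $\sum_j \lg(w_Y(j)+2)$ is balanced by a matching surplus in $\sum_{Z\text{-acc}} \lg(w_X+2)$ arising from the $Z$-access that ``refreshed'' the element; a careful aggregate accounting shows the bound in general.

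For the upper-bound direction $\sum_i \lg(w_X(i)+2) = O(\sum_j \lg(w_Y(j)+2)+\sum_k \lg(w_Z(k)+2))$, I plan to decompose each $X$-window. For $x_i = y_j$ with disjoint items, the distinct items in the $X$-window split cleanly into $w_Y(j)$ from $Y$ and a count $b_j$ from $Z$, so $w_X(i) = w_Y(j)+b_j$ and the elementary inequality $\lg(a+b+2)\le \lg(a+2)+\lg(b+2)$ gives $\lg(w_X(i)+2) \le \lg(w_Y(j)+2)+\lg(b_j+2)$. Together with the symmetric decomposition for $Z$-accesses (introducing analogous quantities $c_k$), it suffices to bound $\sum_j \lg(b_j+2)+\sum_k \lg(c_k+2)$ by $O(\sum_j \lg(w_Y(j)+2)+\sum_k \lg(w_Z(k)+2))$. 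The central combinatorial lemma, proved by first-occurrence counting, is that any contiguous sub-range of a sequence $S$ containing $d$ distinct items satisfies $\sum \lg(w_S+2) = \Omega(d\lg d)$ when summed over the sub-range, since the $i$-th new item to appear in the range has within-range working set at least $i-1$, and restricting a sequence to a contiguous sub-range can only decrease working-set values.

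The main obstacle is window overlap: the $X$-windows $W_j$ defining $b_j$ can nest in complicated ways across different $j$, so a given $Z$-access may lie in many $W_j$'s, and a naive application of the chunks lemma overcounts by an unbounded factor. My plan to handle this is to select for each $y_j$ with $b_j \ge 2$ a unique representative $Z$-access $z_{k(j)}$ within $W_j$---specifically, the latest $Z$-access in $W_j$ whose element is \emph{new} in $W_j$ (i.e., not previously accessed within $W_j$). By construction the previous $Z$-access of $z_{k(j)}$'s element lies outside $W_j$, so the working-set window of $z_{k(j)}$ in $Z$ contains all $b_j-1$ other distinct items of $W_j$, giving $w_Z(k(j)) \ge b_j-1$ and hence $\lg(b_j+2) = O(\lg(w_Z(k(j))+2))$. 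A careful counting argument---bounding, via the distinct-element structure around $z_k$ in $X$, the number of $y_j$'s for which a fixed $z_k$ can be the ``last new'' representative---then yields the desired total bound $O(\sum_k \lg(w_Z(k)+2))$. The symmetric $Z$-bound follows by swapping the roles of $Y$ and $Z$, and shared items are handled by accounting separately for elements common to both subsequences.
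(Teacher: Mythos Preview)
Your decomposition $w_X(i)\le w_Y(j)+b_j$ and the reduction to bounding $\sum_j\lg(b_j+2)$ is a reasonable start, but the charging scheme you propose for that sum does not work, and in fact the specific sub-claim you state --- that $\sum_j\lg(b_j+2)=O\bigl(\sum_k\lg(w_Z(k)+2)\bigr)$ --- is false. Take $Y=a_1,\ldots,a_M,a_1,\ldots,a_M$ with the $a_i$ distinct, $Z=b_1,\ldots,b_N$ with the $b_k$ distinct and disjoint from the $a_i$, and $X=a_1,\ldots,a_M,b_1,\ldots,b_N,a_1,\ldots,a_M$. For each second occurrence $y_{M+j}$ the window $W_{M+j}$ contains all of $b_1,\ldots,b_N$, so $b_{M+j}=N$, and moreover the ``last new'' $Z$-access in every such window is $b_N$. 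Hence all $M$ of these $y$-accesses pick the \emph{same} representative $z_N$, and you are charging $M\lg(N+2)$ against the single term $\lg(w_Z(N)+2)=\lg(N+1)$. With $M=N^2$ this gives $\sum_j\lg(b_j+2)\ge N^2\lg N$ while $\sum_k\lg(w_Z(k)+2)=\Theta(N\lg N)$. No ``careful counting'' of how many $y_j$'s share a representative can repair this, because the multiplicity here is governed by the $Y$-structure, not by anything visible in $Z$ alone; the correct target is $O(\sum_Y+\sum_Z)$, and your charging must route some of the cost back to $\sum_Y$.

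The paper handles the $O$ direction rather differently. Instead of your per-window representative, it isolates only those $Y$-indices $j$ for which $w_X(m_Y(j))\ge w_Y(j)^2$ (so that $\lg w_X$ more than doubles relative to $\lg w_Y$; all other indices already cost at most $2\sum_Y$), buckets them into doubly-exponential levels by $\lfloor\lg\lg w_X\rfloor$, and then argues a covering bound: at level $i$, each bad $j$ forces $\Omega(2^{2^i})$ many $Z$-indices in its window whose $Z$-working-set is itself $\Omega(2^{2^i})$, and a counting of how many bad $j$'s can cover a fixed $z_k$ at that level shows $|A_i|=O(|C_i|)$. Summing $|C_i|\cdot 2^i$ over levels is then bounded by $\sum_k\lg(w_Z(k)+2)$ because each $z_k$ lies in only the levels up to $\lg\lg w_Z(k)$. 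The crucial difference from your scheme is that the paper never tries to charge all of $\lg(b_j+2)$ to a single $Z$-term; the squaring threshold has already absorbed the part that must be paid by $\sum_Y$, so what remains really is $O(\sum_Z)$.
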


\begin{proof}

The $\Omega$ direction is immediate and thus we focus on the $O$ direction.

Let $m_X(i)=j$ if $y_i$ corresponds to $x_{m(i)}$ in the subsequence relationship of~$Y$ with regard to~$X$. Let $m_Y$ and all subsequent notion in this section subscripted on $Y$ have an analgous definition on $Z$.

Let $\omega_X(i)$ be the largest $j<i$ such that $x_i=x_j$. That is, $x_{\omega_X(i)}$ is the previous access to the operation $x_i$ in $X$. Let $W_X(i)$ be the set of indicides $j$ such that $x_j$ is the first occurence of that element  in the range $x_{\omega_X(i)+}..x_{i-1}$. This definition is such that $W_X(i)$ is the set envisioned by the concept of the ``working set'' of $x_i$ and is constructed so that $|W_X(i)|=w_X(i)$.

Observe that the largest $j$ in $W_X(i)$ has the property that its working set number is at least $w_X(x_i)-1$ since it is the first occurrence of the value $x_j$ where $j \in W_X(i)$, and there are $|W_X(i)|-1=w_X(x_i)-1$ different before it. In general, the $k$th largest $j$ in $W_X(i)$ has a working set number at least $w_X(x_i)-j$. Let $W'_X$ be the $|W_X(i)|/2$ largest elements of $W_X$.
This all $\left| \frac{1}{2}w_X(i) \right|$ elements $j$ of $W'_X$ have working set number at least $\frac{1}{2}w_X(i)$.

Let $A_i$ be formally defined to be:

$$ A_i= \set{ j | w_X({m_Y(j)})\geq  w_Y(j)^2 \text{ and }  i=\lfloor \lg w_X({m_Y(j)})\rfloor }  $$

That is, it consists of the indicies of those elements in $Y$ whose  logarithm of its working set number is double in $X$ relative to $Y$ and where it is in the range $[2^{2^i} .. \cdot2^{2^{i+1}})$. All sets $A_i$ are thus disjoint, and all indicies $i$ in $[1..|Y|]$ are in some set $A_i$ unless the log of the working set of the element $y_i$ does not change by more than a factor of two as a result of the merge with $Z$.

Now, pick some element $j \in A_i$. Recall that $W_X(i)$ is the working-set of $x_i$ in $X$, and $W'_X(i)$ is half of the elements of $W_X(i)$. Some of these elements come from $Y$, and some from $Z$. However, the vast majority come from $Z$, since the total number is at least the number from $Y$ squared. Very conservatively, at least half of the elements in $W'(i)$ have some $j$ such that $m_Z(j)=i$. We say that these elements of $Z$ are covered by the element $j$ at level $i$; this set is represented by $C_i(j)$ and is defined as follows, and then bounded:

$$ C_i(j) = \set{ k | m_Z(k) \in W'_X(j)}  $$

$$2^{2^{i+1}}\geq w_X(j)=| W_X(j)| \geq |C_i(j)| \geq \frac{1}{2}| W'_X(j)|=\frac{1}{4}| W_X(j)|=\frac{1}{4}w_X(j)\geq\frac{1}{4}2^{2^{i}}$$

By construction, for any fixed $k$, there are only at most $2^{2^{i+1}}$ elements $j$ such that $C_i(j)=k$.
Thus the size of the the covered set of the union of all $C_i(j)$ where $j \in A_i$ is $\Omega(|A_i|)$; we denote this set as $C_i$. So, therefore:

$$ \sum_{i=0}^{\infty} |C_i|2^i \leq  \sum_{i=1}^{|Z|} 2\log w_Z(z_i) $$

This is because for each $z_i$, contributes at most to each $|C_i|$ where $w_Z(z_i)\leq 2^{2^{i+1}}$ and is not part of any $C_i$  when $w_Z(z_i) > 2^{2^{i+1}}$.

Putting this information together gives:

\begin{align*}
\sum_{i=1}^{|Y|} w_X(x_{m(i)}) - \sum_{i=1}^{|Y|}w_Y(x_{m(i)})
&= \Theta\left( |Y|+\sum_{i=0}^\infty |A_i|2^i \right)
\\
& = O\left( |Y|+\sum_{i=0}^\infty |C_i|2^i \right)
\\
& = \oh { |Y|+\sum_{i=1}^{|Z|} \log w_Z(z_i) }
\end{align*}

Taking this equation, linearly combining it with the symmetric version where $Y$ and $Z$ are transposed, and noting that $|Y|$ and $|Z|$ are lower-order terms in the resultant equation yields the claim of this Theorem.

\end{proof}

\end{document}